\documentclass[11pt]{article}

\usepackage[a4paper,margin=25mm]{geometry}
\usepackage{graphicx,xcolor}
\usepackage{float}
\usepackage{amsmath,amssymb,bm,amsthm}
\usepackage{array}
\usepackage{hyperref}
\hypersetup{colorlinks=true,linkcolor=red,citecolor=blue,urlcolor=orange}
\usepackage[hang,small,bf]{caption}
\usepackage[subrefformat=parens]{subcaption}
\usepackage{natbib}
\usepackage{enumitem}
\usepackage[framemethod=tikz]{mdframed}
\usepackage{placeins}

\newmdenv[linewidth=0.4pt,everyline=true,skipabove=6pt,skipbelow=6pt,
  innerleftmargin=5pt,innerrightmargin=5pt,
  innertopmargin=3pt,innerbottommargin=3pt,
  splittopskip=14pt,splitbottomskip=4pt]{paperbox}
\title{A Second-Order Asymptotic Expansion for Nonlinear Filtering in the Lorenz--63 System}
\author{Masahiro Kurisaki\thanks{AIP Center, RIKEN (\href{mailto:masahiro.kurisaki@riken.jp}{\texttt{masahiro.kurisaki@riken.jp}}).}
\and Kota Takeda\thanks{Department of Applied Physics, Graduate School of Engineering,
Nagoya University, Nagoya, Japan (\href{mailto:takeda@na.nuap.nagoya-u.ac.jp}{\texttt{takeda@na.nuap.nagoya-u.ac.jp}}).}}
\date{October 7, 2026}
\numberwithin{equation}{section}
\allowdisplaybreaks[3]
\newtheorem{theorem}{Theorem}[section]
\newtheorem{proposition}[theorem]{Proposition}

\theoremstyle{definition}

\theoremstyle{remark}
\newtheorem{remark}{Remark}

\begin{document}
\maketitle

\begin{abstract}
  A challenge in nonlinear filtering is to incorporate nonlinear effects into approximations of the conditional mean while retaining a finite-dimensional recursive algorithm. We derive a second-order asymptotic expansion of the filtering mean with respect to the system-noise intensity. We apply this expansion to the Lorenz--63 model, a prototypical three-dimensional chaotic dynamical system. The first-order approximation is computed using a Kalman--Bucy filter for the system linearized around the deterministic reference trajectory. The second-order correction incorporates the Hessian of the Lorenz--63 drift and is computed through a closed recursive system with $3+9+18=30$ independent scalar components. The full system, including the reference trajectory and the first-order mean and covariance, has 42 independent scalar components. To support stable tracking of chaotic trajectories over long time intervals, we introduce recentering of the expansion and redistribution of Gaussian components between expansion orders. In the numerical experiments, we compare the proposed algorithm with an ensemble transform Kalman filter (ETKF) as a practical benchmark and a particle filter (PF) as a numerical reference. Under the full-observation settings, the proposed algorithm achieves a lower root-mean-square error relative to the truth and a smaller distance to the 10,000-particle PF reference than the 20-member ETKF. These results suggest that the proposed second-order extension of Kalman filtering has the potential to balance filtering accuracy and computational cost.
\end{abstract}

\section{Introduction}

Sequential state estimation for nonlinear dynamical systems arises in many geophysical applications, including numerical weather prediction \citep{evensenDataAssimilationFundamentals2022}. Bayesian nonlinear filtering describes this problem through the conditional distribution of the state given the observation history. Its conditional mean minimizes the conditional mean-square estimation error. In continuous time, the conditional distribution is measure-valued and is characterized, under suitable assumptions, by the Kushner--Stratonovich equation \citet{kushner1964,stratonovich1960}; an unnormalized version satisfies the Zakai equation \citep{Mortensen1966,Zakai1969,Duncan1970a,Duncan1970b}. These equations are generally infinite-dimensional \citep{HAZEWINKEL1983}. For linear systems with Gaussian initial conditions and Gaussian noise, however, the conditional distribution remains Gaussian and the Kalman--Bucy filter gives closed equations for its mean and covariance \citep{kalman1960,kalman1961}. Such a finite-dimensional closure is available only for special classes of nonlinear models. This motivates finite-dimensional approximations that retain relevant nonlinear effects on the conditional distribution, and in particular on its mean.

Particle filters approximate the conditional distribution by weighted empirical measures and converge as the number of particles tends to infinity under standard regularity assumptions \citep{gordon1993,Crisan2006}. Their direct importance-sampling implementations can nevertheless suffer from weight degeneracy. In representative high-dimensional settings, preventing this degeneracy requires a number of particles that grows exponentially with an effective dimension \citep{snyderObstaclesHighDimensionalParticle2008,bickelSharpFailureRates2008,suraceHowAvoidCurse2019}. Continuous-time interacting-particle formulations provide another class of nonlinear filtering methods \citep{taghvaeiKalmanFilterIts2018}. Ensemble Kalman methods instead update an ensemble using its sample mean and covariance, and are widely used in large-scale geophysical applications because they can be effective with much smaller ensembles \citep{evensenDataAssimilationEnsemble2009}. Stability and accuracy results are available under specific assumptions, for example for the ensemble Kalman--Bucy filter with full observations and small observation noise and for the ensemble transform Kalman filter with covariance inflation \citep{dewiljesLongTimeStability2018,takedaUniformErrorBounds2024}. In the linear-Gaussian setting, ensemble Kalman methods converge to the Kalman filter as the ensemble size increases \citep{mandelConvergenceEnsembleKalman2011,kwiatkowskiConvergenceSquareRoot2015}. For a general nonlinear filtering problem, however, this large-ensemble limit is not the exact nonlinear filter.

An earlier asymptotic approach was developed by \citet{picardNonlinearFilteringOneDimensional1986} for one-dimensional diffusion processes in a small-observation-noise limit. This work constructs finite-dimensional approximate filters from asymptotic estimates of conditional moments and establishes approximation error bounds. The bound requires an injective observation function, which restricts its direct applications to partially observed systems.
Another asymptotic approach expands the Zakai equation for the characteristic function (the Fourier transformation) of the unnormalized filtering distribution \citep{fujiiMomentumspaceApproachAsymptotic2014}. The computational cost can exponentially increase depending on the state dimension if we assign an equal number of the Fourier modes for each component. Moreover, the expansion has limitations; it is inconsistent because the observation function goes to $0$ as the asymptotic parameter $\epsilon$ goes to $0$; no rigorous bound is obtained for the truncation remainder.

\citet{kurisaki2026ks} rigorously derives a small-system-noise expansion of conditional expectations and provides a probabilistic bound on the truncation remainder.
\citet{kurisaki2026recursions} shows that the expansion coefficients can be computed through a finite-dimensional system of stochastic differential equations driven by the observation process.
For fixed expansion order and representation structure, the dimension of this system grows at most polynomially with the state dimension.
Explicit recursive equations are worked out through third order in a one-dimensional setting, while deriving tractable equations for specific multidimensional models requires further calculation.

In this paper, we derive an explicit form of the second-order small-system-noise expansion of \citet{kurisaki2026ks,kurisaki2026recursions} for a stochastically forced Lorenz--63 model with continuous-time linear observations. The Lorenz--63 model is widely used as a three-dimensional chaotic model related to geophysical applications \citep{lorenzDeterministicNonperiodicFlow1963}. Our aim is to compute the conditional mean beyond the linearized approximation while retaining a finite-dimensional recursive algorithm.
The first-order coefficient is obtained from the Kalman--Bucy filter associated with the first-variation system. At second order, the recursive state contains coefficient arrays of sizes $d$, $d^2$, and $d^3$ for a $d$-dimensional signal; thus the number of stored coefficient components is $O(d^3)$. This estimate concerns the size of the coefficient system, not a complete runtime comparison with particle filters or asymptotic filters. For long-time filtering, we also introduce reparameterizations of the expansion through recentering and redistribution of Gaussian components between expansion orders. The numerical experiments reported below indicate that these reparameterizations substantially improve long-time stability, although their theoretical justification is outside the scope of this work. In the tested small-to-moderate system-noise regimes, the resulting second-order estimate attains accuracy close to that of the large-ensemble bootstrap particle-filter reference; we do not claim this behavior uniformly over noise levels or model classes.
Our expression of the expansion still has limitations when applied to partially observed high-dimensional systems.

The expansion is numerically unstable under one-dimensional observations of the Lorenz--63 model. Moreover, the coefficient count of $\mathcal{O}(d^3)$ is still large for high-dimensional systems. Accordingly, deriving a more stable representation and a lower-dimensional approximation of our expansion is left for future work.

The contributions of this work are as follows:
\begin{enumerate}
\item We specialize the small-system-noise expansion of \citet{kurisaki2026ks,kurisaki2026recursions} to the continuously observed Lorenz--63 model and derive a finite-dimensional recursive system for the first- and second-order coefficients of the conditional mean.
\item We introduce recentering and Gaussian-component redistribution, and show numerically that these transformations yield more stable long-time computations than the uncentered representation.
\item We compare the second-order approximation with a particle filter and an ensemble transform Kalman filter over several system-noise levels. In the tested small-to-moderate noise regimes, the second-order method gives estimation accuracy close to that of the particle-filter reference while evolving a coefficient system with $O(d^3)$ components.
\end{enumerate}

The remainder of this paper is organized as follows. Section 2 presents the second-order expansion and derives recursive equations for its coefficients. Section 3 specializes these equations to the Lorenz--63 model. Section 4 presents numerical experiments and introduces reset operations for long-time computations. Section 5 summarizes the findings and discusses future directions.

\section{General theory for the second-order expansion of the nonlinear filter}
\subsection{Expansion of the system}
Let $(\Omega,\mathcal{F},\{\mathcal{F}_t\}_{t\geq 0},P)$ be a filtered probability space, and assume $d_1$ and $d_2$-dimensional stochastic processes $\{X_{t}^\epsilon\}_{t\geq 0}$ and $\{Y_t^\epsilon\}$ satisfy the stochastic differential equation
\begin{align}
  \label{eq-X}&dX_t^\epsilon=a(X_t^\epsilon)dt+\epsilon dV_t,~~X_0=x_0,\\
  \label{eq-Y}&dY_t^\epsilon=HX_t^\epsilon dt+\sigma dW_t,~~Y_0=0.
\end{align}
Here, $a:\mathbb{R}^{d_1}\to \mathbb{R}^{d_1}$ is a function of class $C^\infty$, $H \in \mathbb{R}^{d_2\times d_1}$, $\epsilon>0$, $\sigma>0$ and $x_0 \in \mathbb{R}^{d_1}$ are constants, and $\{V\}_{t\geq 0}$ and $\{W\}_{t \geq 0}$ are independent $d_1$ and $d_2$-dimensional $\{\mathcal{F}_t\}_{t \geq 0}$-Wiener processes.
The deterministic initial state is assumed to simplify the derivation;
random initial coefficients relevant to restarting are discussed in
Remark~\ref{rem:random-initial-coefficients}.

Furthermore, we assume that for any $k \in \mathbb{Z}_+$
\begin{align}
  \label{eq:assumption-a}\sup_{x \in \mathbb{R}^{d_1}}|\partial^ka(x)|<\infty
\end{align}
for some constant $C>0$. Here, for $\phi \in C^k(\mathbb{R}^m; \mathbb{R}^n)$, the $k$-th derivative $\partial^k \phi(x)$ is regarded as an $\mathbb{R}^n$-valued $k$-tensor whose $(i_1,\ldots,i_k)$-th component is given by
  \begin{align*}
      \frac{\partial}{\partial x_{i_1}} \cdots \frac{\partial}{\partial x_{i_k}} \phi(x_1, \cdots, x_m).
  \end{align*}

Under this condition, the system $\{X_t^\epsilon\}_{t\geq 0}$ has the following Taylor expansion.
\begin{proposition}[Propositon 3.1 in \citet{kurisaki2026ks}]
  For any $t\geq 0$, we have the expansion
  \begin{align*}
    X_t^\epsilon=\sum_{k=0}^{n} \frac{\epsilon^k}{k!}X_t^{[k]}+r_t^{n+1,\epsilon},
  \end{align*}
  where $X_t^{[k]}$ is the solution of the $k$-th formal derivative of the system equation (\ref{eq-X}) at $\epsilon=0$, and the remainder term $r_t^{\epsilon,n+1}$ satisfies
  \begin{align*}
    \sup_{0<\epsilon<1}\frac{1}{\epsilon^{n+1}}E\left[\sup_{0\leq s\leq t} \left|r_s^{\epsilon,n+1}\right|^p \right]^\frac{1}{p}<\infty.
  \end{align*}
  for any $t\geq 0$ and $p\geq 1$.

  More explicitly, $X_t^{[k]}$ is the solution of the equation
  \begin{align*}
    &dX_t^{[0]}=a(X_t^{[0]})dt\\
    &dX_t^{[1]}=\partial a(X_t^{[0]})X_t^{[1]}dt+dV_t,\\
  &dX_t^{[k]}=\sum_{j=1}^k \sum_{(i_1,\cdots,i_j)\in \Lambda(k,j)}\nu(k,(i_1,\cdots,i_j))\partial^{j}a(X_t^{[0]})[X_t^{[i_1]}\otimes \cdots \otimes X_t^{[i_j]}]dt,\qquad k\geq 2,
\end{align*}
with $X_0^{[0]}=x_0$ and $X_0^{[k]}=0$ for $k\geq 1$. Here,
\begin{align*}
  \Lambda(k,j)=\left\{ (i_1,\cdots,i_j)\in \mathbb{N}^j;i_1\leq \cdots \leq i_j,~i_1+\cdots+i_j=k  \right\},
\end{align*}
and $\nu(k,(i_1,\cdots,i_j))$ represents the number of partitions of the set $\{1,\cdots,k\}$ into $j$ subsets with $(i_1,\cdots,i_j)$ elements, and
\begin{align*}
  &\partial^{j}a(x_1,\cdots,x_{d_1})[\xi^1\otimes \cdots \otimes \xi^j]\\
  =&\sum_{i_1,\cdots,i_j=1}^{d_1}\frac{\partial}{\partial x_{i_1}} \cdots \frac{\partial}{\partial x_{i_j}} a(x_1, \cdots, x_{d_1})\xi_{i_1}^1\cdots \xi_{i_j}^j
\end{align*}
for $\xi^{i}=(\xi_1^i,\cdots,\xi_{d_1}^i)~(i=1,\cdots,j)$.
\end{proposition}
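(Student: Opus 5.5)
We will prove the statement by treating $\epsilon$ as a parameter and showing that the map $\epsilon\mapsto X^\epsilon_t$ is smooth in $L^p$ with derivatives $X^{[k]}_t$ at $\epsilon=0$. The remainder bound will then follow from a Taylor formula with integral remainder. Since the noise in (\ref{eq-X}) is additive, it is convenient to set $Z_t^\epsilon = X_t^\epsilon-\epsilon V_t$. Then $Z^\epsilon$ solves the random ODE $\dot Z_t^\epsilon = a(Z_t^\epsilon+\epsilon V_t)$ pathwise. Because $a\in C^\infty$ with bounded derivatives of every order, classical ODE theory on differentiable dependence on parameters gives that $(\epsilon,t)\mapsto Z^\epsilon_t(\omega)$ is $C^\infty$ in $\epsilon$ for every $\omega$. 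Moreover, $\partial_\epsilon^k X^\epsilon_t$ solves the $k$-th formally differentiated equation, obtained by applying the multivariate Fa\`a di Bruno formula to $\partial_\epsilon^k[a(X^\epsilon_t)]$. This is exactly where the coefficients $\nu(k,(i_1,\dots,i_j))$ arise: they count set partitions of $\{1,\dots,k\}$ into blocks of sizes $i_1\le\dots\le i_j$. Evaluating at $\epsilon=0$ gives $X^{[0]}$ as the deterministic flow and $X^{[1]}$ with the source term $dV_t$, since $\partial_\epsilon(\epsilon V_t)=V_t$ and all higher derivatives of that term vanish. The formulas for $X^{[k]}$ with $k\ge 2$ follow from Fa\`a di Bruno. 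The initial conditions $X^{[k]}_0=0$ for $k\ge1$ hold because $X_0$ does not depend on $\epsilon$.

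Next we establish moment bounds. The plan is to show by induction on $k$ that for every $T$ and $p\ge1$,
\[
\sup_{0<\epsilon<1}E\Bigl[\sup_{0\le s\le T}\bigl|\partial_\epsilon^k X^\epsilon_s\bigr|^p\Bigr]<\infty .
\]
For $k=1$, the process $\xi^\epsilon=\partial_\epsilon X^\epsilon$ satisfies $d\xi=\partial a(X^\epsilon)\xi\,dt+dV_t$. Since $|\partial a|$ is bounded, Gronwall's inequality applied pathwise gives $\sup_{s\le T}|\xi_s|\le e^{CT}\bigl(\sup_{s\le T}|V_s|+CT\sup_{s\le T}|V_s|\bigr)$, whose moments are finite by Doob's inequality or BDG. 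For $k\ge2$, the equation for $\partial^k_\epsilon X^\epsilon$ is linear, with the same bounded coefficient $\partial a(X^\epsilon)$ and a forcing term. The forcing is a polynomial in lower-order derivatives multiplied by bounded tensors $\partial^j a(X^\epsilon)$. Gronwall's inequality together with H\"older's inequality and the induction hypothesis therefore gives the bound. All constants are uniform in $\epsilon\in(0,1)$ because the bounds in (\ref{eq:assumption-a}) are global.

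Finally, Taylor's theorem applied pathwise at $\epsilon=0$ gives
\[
r^{n+1,\epsilon}_t=\frac{\epsilon^{n+1}}{n!}\int_0^1(1-\theta)^n\,\partial_\epsilon^{n+1}X^{\theta\epsilon}_t\,d\theta .
\]
Taking $\sup_{s\le t}$ inside the integral and applying Minkowski's integral inequality, we get $E[\sup_s|r_s|^p]^{1/p}\le \epsilon^{n+1}\sup_{\eta\in(0,1)}E[\sup_s|\partial^{n+1}_\epsilon X^\eta_s|^p]^{1/p}/(n+1)!$. This is finite by the previous step, which proves the claim.

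The main obstacle is the rigorous justification of pathwise smoothness in $\epsilon$ jointly with the combinatorial identification of the derivative equations. The substitution $Z=X-\epsilon V$ reduces the first issue to deterministic ODE theory with continuous forcing. The second issue is a careful but routine application of the Fa\`a di Bruno formula, grouping set partitions according to their block sizes and using the symmetry of $\partial^j a$.
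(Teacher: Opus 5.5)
Your argument is correct. The paper does not prove this proposition; it imports it from Proposition 3.1 of \citet{kurisaki2026ks}. So there is no in-paper route to compare against, and what you give is a valid self-contained proof.

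The key simplification is that the noise is additive. The substitution $Z^\epsilon=X^\epsilon-\epsilon V$ turns the SDE into a random ODE with continuous forcing. Smoothness of $\epsilon\mapsto X^\epsilon_t$ and the identification $X^{[k]}_t=\partial_\epsilon^k X^\epsilon_t|_{\epsilon=0}$ then follow from deterministic parameter-dependence theory, with the Fa\`a di Bruno grouping of set partitions by block sizes producing $\nu(k,(i_1,\dots,i_j))$. No stochastic-flow or $L^p$-differentiability machinery is needed.

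At every $\epsilon$, the variational equations have the globally bounded linear coefficient $\partial a(X^\epsilon)$ and forcing built from the globally bounded $\partial^j a$. Hence your inductive Gronwall--H\"older moment bounds are uniform in $\epsilon$. The Taylor integral remainder with Minkowski's inequality then gives the $\epsilon^{n+1}$ rate directly.

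The standard alternative is to write an equation for $r^{n+1,\epsilon}$ itself by Taylor-expanding $a(X^\epsilon)$ around $X^{[0]}$, then close a Gronwall (or BDG) estimate inductively in $n$. That route never differentiates in $\epsilon$ away from $0$ and carries over to multiplicative noise. Your route is shorter but relies on additivity and on global derivative bounds, which is also why the paper's remark needs a cutoff for the Lorenz--63 drift.

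Only routine details remain:
\begin{itemize}
\item Joint measurability of $(\theta,\omega)\mapsto\sup_{s\le t}|\partial_\epsilon^{n+1}X_s^{\theta\epsilon}|$, needed for Minkowski's integral inequality, follows from continuity in $(\theta,s)$.
\item Your $k=1$ bound can be tightened to $e^{CT}\sup_{s\le T}|V_s|$, though the stated one is valid.
\end{itemize}
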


In this paper, we focus on the second-order expansion and only use the case of $k=2$. In this case, the equation for $X_t^{[2]}$ is written as
\begin{align}
  dX_t^{[2]}=\partial a(X_t^{[0]})X_t^{[2]}dt+\partial^2 a(X_t^{[0]})[(X_t^{[1]})^{\otimes 2}]dt.\label{eq:X2}
\end{align}

\begin{remark}
  The Lorenz--63 drift does not satisfy assumption (\ref{eq:assumption-a}).
  For the theoretical argument above, one may introduce a smooth cutoff outside
  a sufficiently large region of the state space. The resulting drift satisfies
  the bounded-derivative assumption, while agreeing with the Lorenz--63 drift on
  the region relevant for the finite-time dynamics under consideration.
\end{remark}

\subsection{Second-order expansion of the Kallianpur--Striebel formula}
In this section, we consider the expansion of the filtering mean $E[X_t^\epsilon|\mathcal{Y}_t^\epsilon]$, where $\{\mathcal{Y}_t^\epsilon\}_{t\geq 0}$ is the filtration generated by $\{Y_t^\epsilon\}_{t\geq 0}$ augmented by null sets. To this end, we first recall the explicit expression of the conditional expectation given by the Kallianpur--Striebel formula.

Fix $T>0$ and define a probability measure $Q^\epsilon$ on $\mathcal{F}$ by
\begin{align*}
  Q^\epsilon(A) = E\left[ 1_A (Z_T^\epsilon)^{-1} \right]
\end{align*}
for \( A \in \mathcal{F} \), where
\begin{align}
  \begin{split}
    Z_t^\epsilon &= \exp\left( \frac{1}{\sigma^2}\int_0^t (HX_s^\epsilon)^\top  dY_s^\epsilon- \frac{1}{2\sigma^2} \int_0^t |HX_s^\epsilon|^2 ds \right).
  \end{split}\label{eq-exp-Z}
\end{align}

Since
\begin{align*}
  \int_0^t E[|X_s|^2]ds<\infty,
\end{align*}
$Z_t^\epsilon$ is a martingale and $Q^\epsilon$ is well-defined as a probability measure.

\begin{proposition}[Kallianpur--Striebel formula \citep{bain_crisan2009filtering}]
  If a random variable $U$ is independent of $W$ under $P$ and \( E[|U|] < \infty \), then it holds that
\begin{align}
  \label{eq-Kallianpur-Striebel-epsilon}
  E[U|\mathcal{Y}_t^\epsilon] =
  \frac{\displaystyle E_{Q^\epsilon}\left[ U Z_t^\epsilon \middle| \mathcal{Y}_t^\epsilon \right]}
       {\displaystyle E_{Q^\epsilon}\left[ Z_t^\epsilon \middle| \mathcal{Y}_t^\epsilon \right]}
  \quad (0 \leq t \leq T).
\end{align}
\end{proposition}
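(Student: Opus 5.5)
The plan is the standard Girsanov argument: first show that under $Q^\epsilon$ the observation process is a scaled Brownian motion independent of the signal, and then apply the abstract Bayes formula for conditional expectations under an equivalent change of measure.

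\emph{Step 1 (change of measure).} I will verify that $Q^\epsilon$ is a probability measure equivalent to $P$ with $dQ^\epsilon/dP|_{\mathcal{F}_T}=(Z_T^\epsilon)^{-1}$. Substituting $dY_s^\epsilon=HX_s^\epsilon ds+\sigma dW_s$ gives
\begin{align*}
(Z_t^\epsilon)^{-1}=\exp\left(-\frac{1}{\sigma}\int_0^t (HX_s^\epsilon)^\top dW_s-\frac{1}{2\sigma^2}\int_0^t|HX_s^\epsilon|^2ds\right).
\end{align*}
This is a Dol\'eans exponential. It is a true martingale because $a$ has bounded derivatives, so $X^\epsilon$ has linear growth and Gaussian-type moment bounds; Novikov's condition then holds on short time intervals, and one concatenates these intervals (Bene\v{s}' criterion). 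This martingale-property check is the main technical point. The paper's remark that $\int_0^t E|X_s|^2ds<\infty$ is the input, and I would use the localized Novikov argument rather than rely on square-integrability alone.

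\emph{Step 2 (law of $(X,Y)$ under $Q^\epsilon$).} By Girsanov's theorem, $\tilde W_t=W_t+\sigma^{-1}\int_0^t HX_s^\epsilon ds=\sigma^{-1}Y_t^\epsilon$ is a $Q^\epsilon$-Brownian motion. The density $(Z^\epsilon)^{-1}$ involves only $W$, so the covariation of $V$ with the density is zero and $V$ remains a $Q^\epsilon$-Brownian motion. Two continuous $\{\mathcal{F}_t\}$-martingales $V$ and $\tilde W$ with zero cross-variation are independent Brownian motions by L\'evy's characterization.

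Since $X^\epsilon$ is a functional of $(x_0,V)$ only, $Y^\epsilon=\sigma\tilde W$ is independent of $X^\epsilon$ under $Q^\epsilon$. More generally, $Y^\epsilon$ is independent of any $U$ that is $\sigma(V)$-measurable. This is how I read the hypothesis ``$U$ independent of $W$'': it is used through the fact that $U$ is measurable with respect to a $\sigma$-field independent of $W$ and containing the signal. I will also note that $Q^\epsilon$ and $P$ agree on $\sigma(V)$, since the density has conditional expectation $1$ given $\sigma(V)$.

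\emph{Step 3 (abstract Bayes formula).} For $U\in L^1(P)$ and $t\le T$, write $E[U\mid\mathcal{Y}_t^\epsilon]=E_{Q^\epsilon}[UZ_T^\epsilon\mid\mathcal{Y}_t^\epsilon]/E_{Q^\epsilon}[Z_T^\epsilon\mid\mathcal{Y}_t^\epsilon]$. This is verified directly by testing against bounded $\mathcal{Y}_t^\epsilon$-measurable $G$ and using $dP=Z_T^\epsilon dQ^\epsilon$.

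It then remains to replace $Z_T^\epsilon$ by $Z_t^\epsilon$. Condition first on $\mathcal{F}_t\vee\mathcal{Y}_T^\epsilon$. Because $Z^\epsilon$ is a $Q^\epsilon$-martingale and the observation increments after $t$ are independent under $Q^\epsilon$, one gets $E_{Q^\epsilon}[Z_T^\epsilon\mid\mathcal{F}_t\vee\sigma(Y^\epsilon_s-Y^\epsilon_t,s\ge t)]=Z_t^\epsilon$ when restricted appropriately. Combined with $\mathcal{Y}_t^\epsilon\subset\mathcal{F}_t$ and the $\mathcal{F}_t$-measurability of $U$ (or, for general $U$ independent of $W$, the tower property through $\sigma(V)\vee\mathcal{Y}_t^\epsilon$), this yields the stated formula.

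The delicate bookkeeping is making this last conditioning step rigorous when $U$ is not $\mathcal{F}_t$-measurable. I would handle it by first treating $U$ bounded and $\sigma(V)\vee\mathcal{F}_t$-measurable, and then extending by monotone class and $L^1$ approximation.
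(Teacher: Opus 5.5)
Your route (Girsanov to make $\sigma^{-1}Y^\epsilon$ a $Q^\epsilon$-Brownian motion independent of $V$, then the abstract Bayes formula) is the standard argument behind the Bain--Crisan result that the paper cites without proof. Steps 1--2 are fine.

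The step that fails is the replacement of $Z_T^\epsilon$ by $Z_t^\epsilon$. The identity $E_{Q^\epsilon}[Z_T^\epsilon\mid\mathcal{F}_t\vee\sigma(Y_s^\epsilon-Y_t^\epsilon,\,s\ge t)]=Z_t^\epsilon$ is false. Conditioning on the future observations leaves the factor $E_{Q^\epsilon}[\exp(\sigma^{-2}\int_t^T(HX_s^\epsilon)^\top dY_s^\epsilon-\frac{1}{2\sigma^2}\int_t^T|HX_s^\epsilon|^2\,ds)\mid\mathcal{F}_t\vee\mathcal{Y}_T^\epsilon]$. This is the backward likelihood of the future observations, and it is not $1$: at $t=0$ your identity would force the total mass $E_{Q^\epsilon}[Z_T^\epsilon\mid\mathcal{Y}_T^\epsilon]$ of the unnormalized filter to be identically $1$.

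You must condition on the future \emph{signal} noise instead, i.e.\ on $\mathcal{G}_t=\mathcal{F}_t\vee\sigma(V_s:\,s\le T)$. By your Step 2, $(V,\sigma^{-1}Y^\epsilon)$ is a joint $\{\mathcal{F}_t\}$-Brownian motion under $Q^\epsilon$. Hence the increments $(Y_s^\epsilon-Y_t^\epsilon)_{s\ge t}$ are $Q^\epsilon$-independent of $\mathcal{G}_t$, while the whole path of $X^\epsilon$ is $\mathcal{G}_t$-measurable. The conditional Gaussian computation then gives $E_{Q^\epsilon}[Z_T^\epsilon\mid\mathcal{G}_t]=Z_t^\epsilon$, and the tower property yields the formula for every integrable $\mathcal{G}_t$-measurable $U$.

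For $\mathcal{F}_t$-measurable $U$, the martingale property $E_{Q^\epsilon}[Z_T^\epsilon\mid\mathcal{F}_t]=Z_t^\epsilon$ alone suffices, and no independence from $W$ is needed. This case covers every use in the paper ($X_t^\epsilon$, $X_t^{[k]}$, and $X_s^{[1]}$ with $s\le t$). It is just Bayes' formula with $dP/dQ^\epsilon|_{\mathcal{F}_t}=Z_t^\epsilon$, which is essentially how the cited result, stated for $U=\varphi(X_t)$, is obtained.

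Your final extension ``by monotone class and $L^1$ approximation'' to general $U$ independent of $W$ cannot be carried out. Such arguments never leave the class of $\mathcal{G}_t$-measurable variables, and the proposition is in fact false under the bare hypothesis that $U$ is independent of $W$. Here is a counterexample:
\begin{itemize}
\item Take $d_1=d_2=1$, $a\equiv0$, $H=1$, $x_0=0$, $0<t<T$, and $U=\mathrm{sign}(X_t^\epsilon)\,\mathrm{sign}(W_T-W_t)$.
\item Since $\mathrm{sign}(X_t^\epsilon)=\mathrm{sign}(V_t)$ is a fair sign independent of $W$, the variable $U$ is independent of $W$.
\item Under $P$, $E[U\mid\mathcal{F}_t]=0$, so the left-hand side vanishes.
\item Under $Q^\epsilon$, however, $W_T-W_t=\sigma^{-1}\{\xi-(T-t)X_t^\epsilon\}$, where $\xi=(Y_T^\epsilon-Y_t^\epsilon)-\epsilon\int_t^T(V_s-V_t)\,ds$ is a nondegenerate centered Gaussian independent of $\mathcal{F}_t$.
\item Therefore $E_{Q^\epsilon}[U\mid\mathcal{F}_t]=-g(|X_t^\epsilon|)$ with $g(r)=Q^\epsilon(|\xi|<(T-t)r)>0$ for $r>0$.
\item The right-hand side then equals $-E_{Q^\epsilon}[g(|X_t^\epsilon|)Z_t^\epsilon\mid\mathcal{Y}_t^\epsilon]/E_{Q^\epsilon}[Z_t^\epsilon\mid\mathcal{Y}_t^\epsilon]<0$.
\end{itemize}
So your ``reading'' of the hypothesis is a necessary correction, not a reformulation. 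State and prove the formula for integrable $\mathcal{G}_t$-measurable $U$ (or simply $\mathcal{F}_t$-measurable $U$), and drop the extension step.
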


In order to consider the expansion of the formula, write
\begin{align*}
  \tilde{Z}_t^\epsilon=\exp\left( \frac{1}{\sigma^2}\int_0^t (H(X_s^{[0]}+\epsilon X_s^{[1]}))^\top  dY_s^\epsilon- \frac{1}{2\sigma^2} \int_0^t |H(X_s^{[0]}+\epsilon X_s^{[1]})|^2 ds \right).
\end{align*}
Then, since
\begin{align*}
  X_t^\epsilon=X_t^{[0]}+\epsilon X_t^{[1]}+\frac{\epsilon^2}{2}X_t^{[2]}+O(\epsilon^3),\quad Z_t^\epsilon=\tilde{Z}_t^\epsilon+O(\epsilon^2),
\end{align*}
we have
\begin{align*}
  E_{Q^\epsilon}[ (X_t^\epsilon-X_t^{[0]}) Z_t^\epsilon | \mathcal{Y}_t^\epsilon ]
  =\epsilon E_{Q^\epsilon}[ X_t^{[1]}\tilde{Z}_t^\epsilon | \mathcal{Y}_t^\epsilon ]
  +\frac{\epsilon^2}{2} E_{Q^\epsilon}[ X_t^{[2]}\tilde{Z}_t^\epsilon | \mathcal{Y}_t^\epsilon ] + O(\epsilon^3)
\end{align*}
and
\begin{align*}
  E_{Q^\epsilon}[ Z_t^\epsilon | \mathcal{Y}_t^\epsilon ]
  =E_{Q^\epsilon}[ \tilde{Z}_t^\epsilon | \mathcal{Y}_t^\epsilon ]+O(\epsilon^2).
\end{align*}
Therefore, we have
\begin{align*}
  E[X_t^\epsilon|\mathcal{Y}_t^\epsilon]&=X_t^{[0]}+E[X_t^\epsilon-X_t^{[0]}|\mathcal{Y}_t^\epsilon]
  =X_t^{[0]}+\frac{E_{Q^\epsilon}[ (X_t^\epsilon-X_t^{[0]}) Z_t^\epsilon | \mathcal{Y}_t^\epsilon ]}{E_{Q^\epsilon}[ Z_t^\epsilon | \mathcal{Y}_t^\epsilon ]}\\
  &=X_t^{[0]}+\epsilon \frac{E_{Q^\epsilon}[X_t^{[1]} \tilde{Z}_t^\epsilon | \mathcal{Y}_t^\epsilon ]}{E_{Q^\epsilon}[ \tilde{Z}_t^\epsilon | \mathcal{Y}_t^\epsilon ]}+\frac{\epsilon^2}{2} \frac{E_{Q^\epsilon}[X_t^{[2]} \tilde{Z}_t^\epsilon | \mathcal{Y}_t^\epsilon ]}{E_{Q^\epsilon}[ \tilde{Z}_t^\epsilon | \mathcal{Y}_t^\epsilon ]}+O(\epsilon^3).
\end{align*}
The rigorous justification of this argument is given in Theorem 4.10 of \citet{kurisaki2026ks}.

\begin{theorem}[\citep{kurisaki2026ks}]\label{theorem:expansion-KS-formula}
  For any $t\geq 0$, we can write
  \begin{align*}
    E[X_t^\epsilon|\mathcal{Y}_t^\epsilon]=X_t^{[0]}+\epsilon \tilde{E}_t[X_t^{[1]}]+\frac{\epsilon^2}{2}\tilde{E}_t[X_t^{[2]}]+O_P^T(\epsilon^3),
  \end{align*}
  where
  \begin{align*}
    \tilde{E}_t[U]=\frac{E_{Q^\epsilon}[U \tilde{Z}_t^\epsilon | \mathcal{Y}_t^\epsilon ]}{E_{Q^\epsilon}[ \tilde{Z}_t^\epsilon | \mathcal{Y}_t^\epsilon ]}
  \end{align*}
  for $U \in L^1(P)$, and if a family of random variables $\{\eta_t^\epsilon\}_{0\leq \epsilon<1,t\geq 0}$ is $O_P^T(\epsilon^k)$, then
  \begin{align*}
    \sup_{0\leq t\leq T}\sup_{0<\epsilon<1}P(|\eta_t^\epsilon|> K\epsilon^k)\to 0~~(K\to \infty).
  \end{align*}
\end{theorem}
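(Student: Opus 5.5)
The plan is to make the heuristic computation preceding the statement rigorous by working throughout under $Q^\epsilon$, controlling every remainder in $L^p(Q^\epsilon)$, and only at the end converting to $P$ and to the $O_P^T$ sense. Under $Q^\epsilon$, Girsanov's theorem makes $Y^\epsilon/\sigma$ a Wiener process independent of $V$ (hence of $X^{[0]},X^{[1]},X^{[2]}$, which are functionals of $V$ alone). The Kallianpur--Striebel formula \eqref{eq-Kallianpur-Striebel-epsilon} then writes $E[X_t^\epsilon\mid\mathcal{Y}_t^\epsilon]-X_t^{[0]}$ as $N_t^\epsilon/D_t^\epsilon$, with
\begin{align*}
N_t^\epsilon=E_{Q^\epsilon}\bigl[(X_t^\epsilon-X_t^{[0]})Z_t^\epsilon\bigm|\mathcal{Y}_t^\epsilon\bigr],\qquad D_t^\epsilon=E_{Q^\epsilon}\bigl[Z_t^\epsilon\bigm|\mathcal{Y}_t^\epsilon\bigr].
\end{align*}
Because the joint law of $(V,Y^\epsilon)$ under $Q^\epsilon$ does not depend on $\epsilon$ except through the explicit $\epsilon$ in the integrands, all the estimates are uniform in $\epsilon$.

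\textbf{Step 1: the likelihood ratio.} Write $Z_t^\epsilon=\tilde Z_t^\epsilon\exp(\Delta_t^\epsilon)$, where
\begin{align*}
\Delta_t^\epsilon=\frac1{\sigma^2}\int_0^t (H\rho_s)^\top dY_s^\epsilon-\frac1{2\sigma^2}\int_0^t\bigl(|HX_s^\epsilon|^2-|H(X_s^{[0]}+\epsilon X_s^{[1]})|^2\bigr)ds
\end{align*}
and $\rho_s=X_s^\epsilon-X_s^{[0]}-\epsilon X_s^{[1]}=r_s^{2,\epsilon}$. Proposition 3.1 (with $n=1$) gives $\sup_s|\rho_s|=O(\epsilon^2)$ in every $L^p$. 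Burkholder's inequality under $Q^\epsilon$ then gives $\|\Delta_t^\epsilon\|_{L^p(Q^\epsilon)}=O(\epsilon^2)$ uniformly in $t\le T$. Using $|e^x-1|\le |x|(e^x+1)$ and H\"older, it remains to bound exponential moments of $Z^\epsilon$ and $\tilde Z^\epsilon$ in $L^p(Q^\epsilon)$. These follow from the Gaussianity of $X^{[1]}$ and the boundedness of $X^{[0]}$ on $[0,T]$. There is a subtlety: $|HX^\epsilon|^2$ is only of linear growth squared, so for large $p$ one may need to truncate, or to use Novikov-type bounds for $\epsilon$ small. I expect these moment bounds, uniform in $\epsilon$ and $t$, to be the main technical obstacle. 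The result of Step 1 is $\|Z_t^\epsilon-\tilde Z_t^\epsilon\|_{L^p(Q^\epsilon)}=O(\epsilon^2)$.

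\textbf{Step 2: the numerator.} Combining Step 1 with Proposition 3.1 for $n=2$ and H\"older, I obtain
\begin{align*}
\Bigl\|N_t^\epsilon-E_{Q^\epsilon}\Bigl[\Bigl(\epsilon X_t^{[1]}+\frac{\epsilon^2}{2}X_t^{[2]}\Bigr)\tilde Z_t^\epsilon\Bigm|\mathcal{Y}_t^\epsilon\Bigr]\Bigr\|_{L^p}=O(\epsilon^3).
\end{align*}
The leading error term is $\epsilon X^{[1]}(Z-\tilde Z)=O(\epsilon\cdot\epsilon^2)$, and conditional expectation contracts $L^p$ norms. Likewise, $D_t^\epsilon=\tilde D_t^\epsilon+O(\epsilon^2)$ in $L^p$, where $\tilde D_t^\epsilon=E_{Q^\epsilon}[\tilde Z_t^\epsilon\mid\mathcal{Y}_t^\epsilon]$.

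\textbf{Step 3: the ratio.} I expand
\begin{align*}
\frac{N}{D}-\frac{\tilde N}{\tilde D}=\frac{N-\tilde N}{D}+\frac{\tilde N\,(\tilde D-D)}{D\tilde D},
\end{align*}
where $\tilde N=O(\epsilon)$. Hence the error is $O(\epsilon^3)$ in $L^p$ divided by $D$ and $D\tilde D$. To handle the denominators, I use that $D_t^\epsilon>0$ and that $\{1/D_t^\epsilon\}$ is tight uniformly in $t\le T$ and $\epsilon$. This follows because $D_t^\epsilon$ is the $Q^\epsilon$-conditional expectation of a positive variable, and Jensen gives $\log D\ge E_{Q^\epsilon}[\log Z\mid\mathcal{Y}]$, which has bounded moments; the same holds for $\tilde D$.

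\textbf{Step 4: back to $P$.} A product of an $L^p(Q^\epsilon)$-small term and a tight term is $O_P$ under $Q^\epsilon$. I transfer this to $P$ via $dP/dQ^\epsilon=Z_T^\epsilon$, whose $L^2(Q^\epsilon)$ norm is bounded uniformly in $\epsilon$. Concretely,
\begin{align*}
P(A)\le \|Z_T^\epsilon\|_{L^2(Q^\epsilon)}\,Q^\epsilon(A)^{1/2}.
\end{align*}
Taking the supremum over $t\le T$ and $\epsilon<1$ of the Chebyshev bounds then yields the $O_P^T(\epsilon^3)$ statement.
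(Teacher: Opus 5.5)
Your route differs from the paper's. The paper gives no direct estimate: it identifies the problem with the general setting of Theorem~4.10 in \citet{kurisaki2026ks}, taking $n=3$, base dynamics $\partial a(X_t^{[0]})$, observation offset $HX_t^{[0]}$ and gain $\epsilon H$, corrections $H^{[1]}=0$, $H^{[2]}=\frac12HX^{[2]}$, and targets $U^{[k]}$. It then inherits the probabilistic remainder bound from that theorem. Your direct argument is self-contained and shows where each power of $\epsilon$ comes from. Most of it is sound:
\begin{itemize}
\item $V$ remains a Brownian motion under $Q^\epsilon$, so Proposition~3.1 applies there.
\item $\Delta_t^\epsilon=O(\epsilon^2)$ in every $L^p(Q^\epsilon)$ by Burkholder.
\item The Jensen bound $\log D_t^\epsilon\ge E_{Q^\epsilon}[\log Z_t^\epsilon\mid\mathcal Y_t^\epsilon]$ gives uniform tightness of the inverse denominators from polynomial moments alone.
\end{itemize}

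The gap is the claimed uniformity in $\epsilon$ of the exponential moments. Since $V$ and $Y^\epsilon/\sigma$ are independent under $Q^\epsilon$,
\begin{equation*}
E_{Q^\epsilon}\bigl[(Z_t^\epsilon)^q\bigr]=E\Bigl[\exp\Bigl(\frac{q(q-1)}{2\sigma^2}\int_0^t|HX_s^\epsilon|^2\,ds\Bigr)\Bigr],
\end{equation*}
and $X^\epsilon-X^{[0]}$ has Gaussian fluctuations of size $\epsilon\sqrt t$. Take $a\equiv0$, $x_0=0$, $d_1=d_2=1$, $H=1$. The Cameron--Martin formula then gives $E_{Q^\epsilon}[(Z_T^\epsilon)^2]=E[\exp(\epsilon^2\sigma^{-2}\int_0^TV_s^2\,ds)]=\infty$ as soon as $\sqrt2\,\epsilon T/\sigma\ge\pi/2$. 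So your Step~4 claim that $\|Z_T^\epsilon\|_{L^2(Q^\epsilon)}$ is bounded uniformly in $\epsilon<1$ is false.

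The same failure hits the $L^{2p}(Q^\epsilon)$ bounds on $Z^\epsilon$ and $\tilde Z^\epsilon$ in Steps~1--3. It occurs already at the lowest exponent you use, not only for large $p$ as you suggest in Step~1. Your justification that the law of $(V,Y^\epsilon)$ under $Q^\epsilon$ is $\epsilon$-free does not help, because the explicit $\epsilon$ in the integrands is exactly what drives the blow-up. Since the theorem takes $\sup_{0<\epsilon<1}$, the argument as written does not cover the whole range.

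The repair is short.
\begin{enumerate}
\item Choose $\epsilon_0=\epsilon_0(T)$ so that the finitely many exponential moments you actually use are bounded for $\epsilon\le\epsilon_0$. This is possible because $\sup_{s\le T}|X_s^\epsilon-X_s^{[0]}|\le\epsilon e^{LT}\sup_{s\le T}|V_s|$, where $L$ is the Lipschitz constant of $a$. Run your argument in that range.
\item For $\epsilon\in[\epsilon_0,1)$, the $O_P^T(\epsilon^3)$ claim reduces to uniform tightness of $\eta_t^\epsilon$. This holds because $E|E[X_t^\epsilon\mid\mathcal Y_t^\epsilon]|\le E|X_t^\epsilon|$, and $\tilde E_t[X_t^{[1]}]$, $\tilde E_t[X_t^{[2]}]$ are Kalman--Bucy (Gaussian smoother) functionals of $Y^\epsilon$ with $P$-moments bounded uniformly in $t\le T$.
\item Alternatively, Step~4 needs no exponential moment at all. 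Use $P(A)\le M\,Q^\epsilon(A)+P(Z_T^\epsilon>M)$, together with the fact that $\log Z_T^\epsilon=\sigma^{-1}\int_0^T(HX_s^\epsilon)^\top dW_s+\frac{1}{2\sigma^2}\int_0^T|HX_s^\epsilon|^2\,ds$ has $P$-moments bounded uniformly in $\epsilon$.
\end{enumerate}
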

\begin{proof}
See the construction of Section 3 in \citet{kurisaki2026ks}, and apply Theorem 4.10
with $n=3$, setting
\begin{align*}
  &a(t)=\partial a(X_t^{[0]}),\qquad b(t)=I_{d_1},\qquad
 c_0(t)=HX_t^{[0]},\qquad c(t)=\epsilon H,\qquad
 \sigma(t)=\sigma I_{d_2},\\
 &H_t^{[1]}=0,\quad H_t^{[2]}=\tfrac12 HX_t^{[2]},\\
 &U_t^{[0]}=X_t^{[0]},\quad U_t^{[1]}=X_t^{[1]},\quad U_t^{[2]}=\tfrac12 X_t^{[2]}.
\end{align*}
\end{proof}

\subsection{First order equation}
Define the conditional mean and covariance of the first variation by
\begin{align*}
  \mu_t&=\tilde E_t[X_t^{[1]}],\\
  \gamma(t)&=\tilde E_t[
    (X_t^{[1]}-\mu_t)(X_t^{[1]}-\mu_t)^\top].
\end{align*}
Their dependence on $\epsilon$ through the linearized likelihood is
suppressed in the notation.

\begin{proposition}[\citep{KURISAKI2026104997}]
Versions of $\mu_t$ and $\gamma(t)$ satisfy the Kalman--Bucy equations
\begin{align*}
  d\mu_t
  &=
  \partial a(X_t^{[0]})\mu_t\,dt
  +\frac{\epsilon}{\sigma^2}\gamma(t) H^\top
  \left\{
    dY_t^\epsilon-H(X_t^{[0]}+\epsilon\mu_t)\,dt
  \right\},\\
  \frac{d}{dt}\gamma(t)
  &=
  \partial a(X_t^{[0]})\gamma(t)
  +\gamma(t)\partial a(X_t^{[0]})^\top
  +I_{d_1}
  -\frac{\epsilon^2}{\sigma^2}\gamma(t) H^\top H\gamma(t),
\end{align*}
with initial conditions $\mu_0=0$ and $\gamma(0)=0$.
\end{proposition}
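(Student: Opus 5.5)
The plan is to recognize $\tilde E_t$ as the conditional expectation of a linear-Gaussian filtering problem under a new measure, and then invoke the classical Kalman--Bucy theorem. Define $\tilde P^\epsilon$ on $\mathcal{F}_T$ by $d\tilde P^\epsilon = \tilde Z_T^\epsilon\, dQ^\epsilon$.

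\textbf{Step 1: properties of the reference measure.} Under $Q^\epsilon$, Girsanov's theorem applied to $(Z^\epsilon)^{-1}$ shows the following.
\begin{itemize}
\item[(i)] $Y^\epsilon/\sigma$ is a standard Wiener process.
\item[(ii)] $Y^\epsilon$ is independent of $V$.
\item[(iii)] The law of $V$ is unchanged, because the density $(Z_T^\epsilon)^{-1}$ only removes the $W$-drift.
\end{itemize}
Hence $X^{[0]}$ remains deterministic and $X^{[1]}$ keeps its linear SDE $dX^{[1]}_t=\partial a(X^{[0]}_t)X^{[1]}_t\,dt+dV_t$ driven by a $Q^\epsilon$-Brownian motion $V$.

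\textbf{Step 2: $\tilde P^\epsilon$ is a probability measure.} $\tilde Z^\epsilon$ is a $Q^\epsilon$-martingale. This holds because $HX^{[0]}+\epsilon HX^{[1]}$ is Gaussian with continuous deterministic mean and covariance, so a Novikov/Beneš-type criterion applies on short subintervals and can be iterated over $[0,T]$.

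\textbf{Step 3: the new observation structure.} By Girsanov under $\tilde P^\epsilon$, the process
\[
\tilde W_t := \sigma^{-1}\Bigl(Y^\epsilon_t-\int_0^t H(X^{[0]}_s+\epsilon X^{[1]}_s)\,ds\Bigr)
\]
is a Wiener process. Moreover $V$ is still a Wiener process independent of $\tilde W$, since the density depends on $V$ only through $X^{[1]}$, and the Girsanov drift acts only in the $Y$ direction.

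\textbf{Step 4: reduction to the linear filter.} By the abstract Bayes formula,
\[
\tilde E_t[U]=E_{\tilde P^\epsilon}[U\mid\mathcal{Y}^\epsilon_t].
\]
Under $\tilde P^\epsilon$, the pair $(X^{[1]},\tilde Y)$ with $d\tilde Y_t := dY^\epsilon_t-HX^{[0]}_t\,dt=\epsilon HX^{[1]}_t\,dt+\sigma\, d\tilde W_t$ is a linear-Gaussian system with the following data:
\begin{itemize}
\item state matrix $\partial a(X^{[0]}_t)$,
\item unit process noise,
\item observation matrix $\epsilon H$,
\item observation noise level $\sigma$,
\item deterministic initial condition $X^{[1]}_0=0$.
\end{itemize}
Also, $\tilde Y$ generates the same filtration as $Y^\epsilon$, because $X^{[0]}$ is deterministic.

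\textbf{Step 5: apply Kalman--Bucy.} The classical Kalman--Bucy theorem (e.g.\ Bain--Crisan, Ch.~6) states that the conditional law is Gaussian, with mean and covariance solving
\[
d\mu_t=A\mu_t\,dt+\frac{1}{\sigma^2}\gamma(\epsilon H)^\top\bigl(d\tilde Y_t-\epsilon H\mu_t\,dt\bigr),\qquad
\dot\gamma=A\gamma+\gamma A^\top+I-\frac{1}{\sigma^2}\gamma(\epsilon H)^\top(\epsilon H)\gamma,
\]
with $A=\partial a(X^{[0]}_t)$. Substituting $d\tilde Y_t$ gives exactly the stated equations. The initial conditions $\mu_0=0$ and $\gamma(0)=0$ follow from the deterministic initial value.

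\textbf{Main obstacle.} The delicate point is Steps 1--3: verifying that $V$ stays a Brownian motion independent of the new observation noise under the composed change of measure $\tilde Z^\epsilon/Z^\epsilon$, and that $\tilde Z^\epsilon$ is a true martingale. For the martingale property I would first try the Gaussian-linear argument (Beneš' condition, since the drift is linear in $X^{[1]}$). Only "versions" are claimed, so identifying the filtrations up to null sets suffices.
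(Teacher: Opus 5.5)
Your proposal is correct, but it runs in the opposite direction from the paper's argument. The paper never leaves the reference measure. Under $Q^\epsilon$ the Gaussian process $X^{[1]}$ is independent of the Brownian motion $Y^\epsilon/\sigma$, so $\tilde E_t$ is literally the Kallianpur--Striebel ratio for the linear system with state matrix $\partial a(X_t^{[0]})$, unit process noise and observation matrix $\epsilon H$ (after subtracting $HX_t^{[0]}\,dt$). The paper then reruns the derivation of the Kalman--Bucy equations from that ratio, as in \citet{KURISAKI2026104997}.

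You instead undo the reference measure. You construct $\tilde P^\epsilon=\tilde Z_T^\epsilon\,Q^\epsilon$, check by Girsanov that the linear system with independent Brownian motions $V,\tilde W$ holds there, and quote a textbook Kalman--Bucy theorem as a black box. This route is modular and has a real side benefit: it exhibits $\tilde E_t$ as an honest conditional expectation with a Gaussian conditional law, which is the structure Section~\ref{sec:second-order-coefficient} relies on. It would also let you import classical interpolation results for $\mu_{s;t}$ and $\gamma(s,u;t)$.

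The price is extra bookkeeping. The true martingale property of $\tilde Z^\epsilon$ is most easily obtained by conditioning on $V$: under $Q^\epsilon$ the integrand $H(X^{[0]}+\epsilon X^{[1]})$ is $\sigma(V)$-measurable and independent of $Y^\epsilon/\sigma$, so $E_{Q^\epsilon}[\tilde Z_T^\epsilon\mid\sigma(V)]=1$.

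You should also make explicit one step you skipped in Step 4. The abstract Bayes formula produces $\tilde Z_T^\epsilon$, while $\tilde E_t$ is defined with $\tilde Z_t^\epsilon$. Replacing one by the other needs the tower property, the $Q^\epsilon$-martingale property, and the $\mathcal F_t$-measurability of $X_t^{[1]}$ and its products.

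The paper's direct route avoids the second change of measure altogether. The same cited computation also supplies the smoothing identities \eqref{eq:smoothing1}--\eqref{eq:smoothing3} used later.
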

The rigorous proof of this statement is exactly the same as the derivation of the Kalman--Bucy filter from the Kallianpur--Striebel formula given in \citet{KURISAKI2026104997}.

\subsection{Second-order coefficient}
\label{sec:second-order-coefficient}
We derive a recursion for $m_t^{(2)}=\tilde E_t[X_t^{[2]}]$ by
differentiating its integral representation with respect to the terminal
observation time. A related derivation of the second-order coefficients in
the one-dimensional setting is given in
\citet[Section 3]{kurisaki2026recursions}.

We identify $u\otimes v$ with $uv^\top$ and write $u^{\otimes2}=u\otimes u$.
For a matrix $M$, Hessian contraction means
\[
  \bigl(\partial^2a(x)[M]\bigr)_\ell
  =\sum_{p,q=1}^{d_1}\partial_p\partial_q a_\ell(x)M_{pq}.
\]
In particular, $\partial^2a(x)[u\otimes v]$ is the usual bilinear Hessian action.
The fundamental matrix is defined by
\begin{equation}
  \frac{\partial}{\partial t}\Phi(s,t)=\partial a(X_t^{[0]})\Phi(s,t),\qquad \Phi(s,s)=I_{d_1}.
  \label{eq:second-fundamental}
\end{equation}
By \eqref{eq:X2},
\begin{equation}
  X_t^{[2]}=\int_0^t\Phi(s,t)\partial^2a(X_s^{[0]})
                   [(X_s^{[1]})^{\otimes2}]\,ds.
\end{equation}
Let
\begin{equation}
  \mu_{s;t}=\tilde E_t[X_s^{[1]}],\qquad
  \gamma(s,u;t)=\operatorname{Cov}_{\tilde E_t}(X_s^{[1]},X_u^{[1]}),
\end{equation}
so that $\mu_{t;t}=\mu_t$ and $\gamma(t,t;t)=\gamma(t)$.
Conditional Fubini's theorem and the Gaussian second-moment formula give
\begin{equation}
  m_t^{(2)}=\int_0^t\Phi(s,t)\partial^2a(X_s^{[0]})
                    [\mu_{s;t}^{\otimes2}+\gamma(s,s;t)]\,ds.
  \label{eq:second-mean-integral}
\end{equation}
In order to differentiate this integral with respect to $t$, we use the following formulas for fixed $s,u<t$.
Here $d_t$ denotes the It\^o differential with respect to $t$.
\begin{align}
  d_t\mu_{s;t}&=\frac{\epsilon}{\sigma^2}\gamma(s,t;t)H^\top \left\{ dY_t^\epsilon-H(X_t^{[0]}+\epsilon\mu_t)\,dt \right\},\label{eq:smoothing1}\\
  \frac{\partial}{\partial t}\gamma(s,u;t)&=-\frac{\epsilon^2}{\sigma^2}\gamma(s,t;t)H^\top H\gamma(t,u;t),\label{eq:smoothing2}\\
  \frac{\partial}{\partial t}\gamma(s,t;t)&=\gamma(s,t;t)\left\{\partial a(X_t^{[0]})^\top-\frac{\epsilon^2}{\sigma^2}H^\top H\gamma(t)\right\},
  \label{eq:smoothing3}
\end{align}
where \eqref{eq:smoothing1} means 
\begin{align*}
  \mu_{s;t}=\mu_{s;s}+\int_{s}^t\frac{\epsilon}{\sigma^2}\gamma(s,r;r)H^\top \left\{ dY_r^\epsilon-H(X_r^{[0]}+\epsilon\mu_r)\,dr \right\}.
\end{align*}
For the proof of these equations, see \citet{KURISAKI2026104997}.

In what follows, we write
\begin{equation}
  d\nu_t=dY_t^\epsilon-H(X_t^{[0]}+\epsilon\mu_t)\,dt.
  \label{eq:second-innovation}
\end{equation}

\paragraph{The equation for $m_t^{(2)}$}
Using \eqref{eq:smoothing1}, \eqref{eq:smoothing2} and It\^o's formula, we can prove that
\begin{equation}
\begin{aligned}
 d_t\{\mu_{s;t}^{\otimes2}+\gamma(s,s;t)\}
 &=\frac{\epsilon}{\sigma^2}
   \{\mu_{s;t}\otimes(\gamma(s,t;t)H^\top d\nu_t)\\
 &\qquad+(\gamma(s,t;t)H^\top d\nu_t)\otimes\mu_{s;t}\}.
\end{aligned}
\label{eq:second-N-differential}
\end{equation}
Therefore, differentiating \eqref{eq:second-mean-integral} gives
\begin{equation}
 dm_t^{(2)}
 =\{\partial a(X_t^{[0]})m_t^{(2)}+\partial^2a(X_t^{[0]})
       [\mu_t^{\otimes2}+\gamma(t)]\}\,dt
   +\frac{\epsilon}{\sigma^2}Q_t^\top H^\top d\nu_t,
 \label{eq:second-m-equation}
\end{equation}
where
\begin{equation}
 Q_t^\top
 =2\int_0^t\Phi(s,t)\partial^2a(X_s^{[0]})[\mu_{s;t}]
                 \gamma(s,t;t)\,ds.
 \label{eq:second-Q-integral}
\end{equation}
Here $\partial^2a(x)[u]$ denotes the matrix satisfying
$\partial^2a(x)[u]v=\partial^2a(x)[u\otimes v]$.
All integrals being differentiated are with respect to $ds$;
the interchange with stochastic integration is justified by stochastic
Fubini under the finite-horizon moment bounds.

\paragraph{The equation for $Q_t$}
For fixed $s$, \eqref{eq:smoothing1} and \eqref{eq:smoothing2} yields
\begin{equation}
\begin{aligned}
 d_t\{\partial^2a(X_s^{[0]})[\mu_{s;t}]\gamma(s,t;t)\}
 &=\frac{\epsilon}{\sigma^2}\partial^2a(X_s^{[0]})
       [\gamma(s,t;t)H^\top d\nu_t]\gamma(s,t;t)\\
 &\quad+\partial^2a(X_s^{[0]})[\mu_{s;t}]\gamma(s,t;t)
       \left(\partial a(X_t^{[0]})^\top-\frac{\epsilon^2}{\sigma^2}H^\top H\gamma(t)\right)dt.
\end{aligned}
\end{equation}
Define the symmetric bilinear tensor $K(t)$ by
\begin{equation}
 K(t)[u\otimes v]
 =2\int_0^t\Phi(s,t)\partial^2a(X_s^{[0]})
       [(\gamma(s,t;t)u)\otimes(\gamma(s,t;t)v)]\,ds,
 \label{eq:second-K-integral}
\end{equation}
and write $K(t)[u]v=K(t)[u\otimes v]$.
Its components are $K_{\ell ij}(t)=(K(t)[e_i\otimes e_j])_\ell$,
where $e_i$ denotes the $i$th standard basis vector.
Differentiating \eqref{eq:second-Q-integral} and transposing yields
\begin{equation}
\begin{aligned}
 dQ_t
 &=\left\{\left(\partial a(X_t^{[0]})-\frac{\epsilon^2}{\sigma^2}\gamma(t)H^\top H\right)Q_t
       +Q_t \partial a(X_t^{[0]})^\top\right\}dt\\
 &\quad+2\gamma(t)\bigl(\partial^2a(X_t^{[0]})[\mu_t]\bigr)^\top dt\\
 &\quad+\frac{\epsilon}{\sigma^2}\bigl(K(t)[H^\top d\nu_t]\bigr)^\top,
\end{aligned}
 \label{eq:second-Q-equation}
\end{equation}
The $Q_t \partial a(X_t^{[0]})^\top$ term comes from differentiating $\Phi(s,t)$;
the left-multiplication term comes from differentiating the endpoint covariance.

\paragraph{The equation for $K(t)$ and closure}
Both $\Phi(s,t)$ and $\gamma(s,t;t)$ have finite variation in $t$.
Thus, for fixed vectors $u,v$, differentiation of
\eqref{eq:second-K-integral} gives
\begin{equation}
\begin{aligned}
 \frac{d}{dt}\{K(t)[u\otimes v]\}
 &=\partial a(X_t^{[0]})K(t)[u\otimes v]
   \\
 &\quad+K(t)\left[\left\{\left(\partial a(X_t^{[0]})^\top-\frac{\epsilon^2}{\sigma^2}H^\top H\gamma(t)\right)u\right\}\otimes v\right]\\
 &\quad+K(t)\left[u\otimes\left\{\left(\partial a(X_t^{[0]})^\top-\frac{\epsilon^2}{\sigma^2}H^\top H\gamma(t)\right)v\right\}\right]
   \\
 &\quad+2\partial^2a(X_t^{[0]})[(\gamma(t)u)\otimes(\gamma(t)v)].
\end{aligned}
 \label{eq:second-K-equation}
\end{equation}
Equivalently, regarding $K(t)$ and $\partial^2a(X_t^{[0]})$ as linear maps
from $\mathbb R^{d_1}\otimes\mathbb R^{d_1}$ to $\mathbb R^{d_1}$, we obtain the tensor equation
\begin{equation}
\begin{aligned}
 \dot K(t)
 &=\partial a(X_t^{[0]})K(t)
   \\
 &\quad+K(t)\left\{\left(\partial a(X_t^{[0]})^\top-\frac{\epsilon^2}{\sigma^2}H^\top H\gamma(t)\right)\otimes I_{d_1}\right\}\\
 &\quad+K(t)\left\{I_{d_1}\otimes\left(\partial a(X_t^{[0]})^\top-\frac{\epsilon^2}{\sigma^2}H^\top H\gamma(t)\right)\right\}\\
 &\quad+2\partial^2a(X_t^{[0]})(\gamma(t)\otimes\gamma(t)).
\end{aligned}
 \label{eq:second-K-tensor-equation}
\end{equation}
Here juxtaposition denotes composition of linear maps, with
$(M\otimes N)(u\otimes v)=(Mu)\otimes(Nv)$.
In particular, for a matrix $M\in\mathbb R^{d_1\times d_1}$,
left multiplication acts on the output of $K(t)$:
\[
 (MK(t))[u\otimes v]=M\bigl(K(t)[u\otimes v]\bigr),\qquad
 (MK(t))_{\ell ij}=\sum_{r=1}^{d_1}M_{\ell r}K_{rij}(t).
\]
No new auxiliary variable is generated, so
\eqref{eq:second-m-equation}, \eqref{eq:second-Q-equation}, and
\eqref{eq:second-K-equation} form a closed system,
with $m_0^{(2)}=0$, $Q_0=0$, and $K(0)=0$.

\begin{remark}\label{rem:random-initial-coefficients}
One can assume an initial expansion
$X_0^\epsilon=x_0+\epsilon Z_0+\frac{\epsilon^2}{2}U_0+O_{L^p}(\epsilon^3)$, where $x_0$ is constant and $Z_0$ is Gaussian. Adding the initial-value terms to the integral representations gives the
same differential equations, initialized by the corresponding joint moments:
$\mu_0=E[Z_0]$, $\gamma(0)=\operatorname{Cov}(Z_0)$,
$m_0^{(2)}=E[U_0]$, $Q_0=\operatorname{Cov}(Z_0,U_0)$, and
\[
 K_{\ell ij}(0)=E[(U_{0,\ell}-E[U_{0,\ell}])
                    (Z_{0,i}-E[Z_{0,i}])(Z_{0,j}-E[Z_{0,j}])].
\]
\end{remark}

\section{Application to the Lorenz--63 model}
\label{sec:lorenz63-specialization}
We now take $d_1=3$ and specialize the drift to the Lorenz--63 vector field
\begin{align}
  a(x)
  =\begin{pmatrix}
    10(x_2-x_1)\\
    x_1(28-x_3)-x_2\\
    x_1x_2-\dfrac{8}{3}x_3
  \end{pmatrix},
\end{align}
where $x=(x_1,x_2,x_3)^\top$.

Write $A(t)=\partial a(X_t^{[0]})$ and $S=\partial^2a(x)$.
The Hessian $S$ is independent of $x$, and
\begin{equation}
 \partial a(x)=\begin{pmatrix}
 -10&10&0\\28-x_3&-1&-x_1\\x_2&x_1&-8/3
 \end{pmatrix},\qquad
 S[u\otimes v]
 =\begin{pmatrix}0\\-u_1v_3-u_3v_1\\u_1v_2+u_2v_1\end{pmatrix}.
 \label{eq:lorenz-derivatives}
\end{equation}
The corresponding partial contraction is
\[
 S[u]=\begin{pmatrix}
 0&0&0\\-u_3&0&-u_1\\u_2&u_1&0
 \end{pmatrix},\qquad S[u]v=S[u\otimes v].
\]
For the numerical implementation, write
\[
 m_t^{(1)}=\mu_t,\qquad P(t)=\gamma(t),\qquad m_t^{(2)}=\tilde E_t[X_t^{[2]}].
\]
The notation for $Q_t$ and $K(t)$ is unchanged. With
$d\nu_t=dY_t^\epsilon-H(X_t^{[0]}+\epsilon m_t^{(1)})\,dt$,
the equations of Section~\ref{sec:second-order-coefficient} specialize as follows.

\begin{paperbox}
\setlength{\abovedisplayskip}{3pt}
\setlength{\abovedisplayshortskip}{0pt}
\setlength{\belowdisplayskip}{3pt}
\setlength{\belowdisplayshortskip}{0pt}
\begin{align*}
 dX_t^{[0]}&=a(X_t^{[0]})\,dt,\\
 dm_t^{(1)}&=A(t)m_t^{(1)}\,dt
          +\frac{\epsilon}{\sigma^2}P(t)H^\top d\nu_t,\\
 \dot P(t)&=A(t)P(t)+P(t)A(t)^\top+I_3
             -\frac{\epsilon^2}{\sigma^2}P(t)H^\top HP(t),\\
 dm_t^{(2)}&=\left\{A(t)m_t^{(2)}
       +2\begin{pmatrix}
         0\\-m_{t,1}^{(1)}m_{t,3}^{(1)}-P_{13}(t)\\m_{t,1}^{(1)}m_{t,2}^{(1)}+P_{12}(t)
        \end{pmatrix}\right\}dt
       +\frac{\epsilon}{\sigma^2}Q_t^\top H^\top d\nu_t,\\
 dQ_t&=\left\{\left(A(t)-\frac{\epsilon^2}{\sigma^2}P(t)H^\top H\right)Q_t
               +Q_tA(t)^\top\right\}dt\\*
     &\quad+2P(t)\bigl(S[m_t^{(1)}]\bigr)^\top dt
       +\frac{\epsilon}{\sigma^2}\bigl(K(t)[H^\top d\nu_t]\bigr)^\top,\\
 \dot K(t)&=A(t)K(t)\\*
     &\quad+K(t)\left\{\left(A(t)^\top
                 -\frac{\epsilon^2}{\sigma^2}H^\top HP(t)\right)\otimes I_3\right\}\\*
     &\quad+K(t)\left\{I_3\otimes\left(A(t)^\top
                 -\frac{\epsilon^2}{\sigma^2}H^\top HP(t)\right)\right\}\\*
     &\quad+2S(P(t)\otimes P(t)).
\end{align*}
\end{paperbox}
Here $m_{t,i}^{(1)}$ denotes component $i$ of $m_t^{(1)}$, and the derivatives
are given explicitly in \eqref{eq:lorenz-derivatives}.
Initially, $X_0^{[0]}=x_0$ and
$m_0^{(1)}=m_0^{(2)}=0$, $P(0)=Q_0=K(0)=0$.
The first- and second-order estimates are
\[
 \widehat X_t^{\epsilon,(1)}=X_t^{[0]}+\epsilon m_t^{(1)},\qquad
 \widehat X_t^{\epsilon,(2)}=X_t^{[0]}+\epsilon m_t^{(1)}+\frac{\epsilon^2}{2}m_t^{(2)}.
\]

\section{Numerical results}
\subsection{Numerical setup and benchmark filters}
The numerical experiments use the Lorenz--63 model and the specialized filter
equations introduced in Section~\ref{sec:lorenz63-specialization}.
Let $t_k=k\Delta t$. We denote the numerical approximations of
$X_{t_k}^\epsilon$ and $Y_{t_k}^\epsilon$ by $X_k^\epsilon$ and
$Y_k^\epsilon$. In all experiments, we use $\Delta t=0.001$ and the
Euler--Maruyama scheme
\begin{align}
  X_{k+1}^\epsilon
  &=X_k^\epsilon+a(X_k^\epsilon)\Delta t
    +\epsilon\sqrt{\Delta t}\,\xi_k,
  \label{eq:numerical-signal}\\
  \Delta Y_k
  &:=Y_{k+1}^\epsilon-Y_k^\epsilon
   =H X_k^\epsilon\Delta t
    +\sigma\sqrt{\Delta t}\,\eta_k,
  \label{eq:numerical-observation}
\end{align}
with $X_0^\epsilon=x_0=(1,1,1)^\top$ and $Y_0^\epsilon=0$. Here
$\{\xi_k\}_{k\ge 0}$ and
$\{\eta_k\}_{k\ge 0}$ are independent i.i.d. sequences with
$\xi_k\sim\mathcal N(0,I_3)$ and
$\eta_k\sim\mathcal N(0,I_{d_2})$. Thus, the observation likelihood for the
discrete model is
\begin{align}
  \Delta Y_k\mid X_k^\epsilon
  \sim
  \mathcal N\!\left(
    H X_k^\epsilon\Delta t,
    \sigma^2\Delta t\,I_{d_2}
  \right).
  \label{eq:numerical-likelihood}
\end{align}
For each seed, all methods use the same truth trajectory and observation
increments. We take $\sigma=0.8$ and consider
$\epsilon\in\{0.2,0.5,1.0\}$. The five observation matrices are
\begin{align*}
  H_{\mathrm{full}} &= I_3, &
  H_{xz} &=
  \begin{pmatrix}
    1 & 0 & 0 \\
    0 & 0 & 1
  \end{pmatrix}, &
  H_x &= \begin{pmatrix} 1 & 0 & 0 \end{pmatrix}, \\
  H_y &= \begin{pmatrix} 0 & 1 & 0 \end{pmatrix}, &
  H_z &= \begin{pmatrix} 0 & 0 & 1 \end{pmatrix}.
\end{align*}
We compare the proposed filter with two benchmark filters that serve different
purposes. A bootstrap particle filter (PF) with many particles provides a
numerical reference for the filtering mean of the discrete model. The ensemble
transform Kalman filter (ETKF) provides a practical finite-ensemble baseline
for the long-time experiment.
Both filters use \eqref{eq:numerical-likelihood}. At step $k$, they assimilate
$\Delta Y_k$ at $t_k$ and then propagate their particles or ensemble members to
$t_{k+1}$ using \eqref{eq:numerical-signal}. We compare the resulting forecast
means with $X_{k+1}^\epsilon$. All random draws used by the PF and ETKF are
independent of the truth and observation noises.
We implement a bootstrap PF based on \citet{gordon1993} with
$M_{\mathrm{PF}}=10000$ particles.
Their initial states are
\begin{align*}
  X_0^{\mathrm p,(i)}=x_0+0.01\,\zeta_0^{\mathrm p,(i)},
  \qquad
  \zeta_0^{\mathrm p,(i)}\overset{\mathrm{i.i.d.}}{\sim}\mathcal N(0,I_3).
\end{align*}
At each step, we update the normalized importance weights using
\eqref{eq:numerical-likelihood}. We apply systematic resampling whenever the
effective sample size
$\mathrm{ESS}_k=(\sum_{i=1}^{M_{\mathrm{PF}}}(w_k^{(i)})^2)^{-1}$ is below
$M_{\mathrm{PF}}/2$.
The PF is used in both the short- and long-time experiments.
For the long-time experiment, we also use the ETKF introduced by
\citet{bishopAdaptiveSamplingEnsemble2001}, in the ensemble-space formulation
of \citet{huntEfficientDataAssimilation2007}. We omit localization and use
$M_{\mathrm{ETKF}}=20$ members. The initial ensemble is
\begin{align*}
  X_0^{\mathrm f,(i)}=x_0+0.01\,\zeta_0^{\mathrm f,(i)},
  \qquad
  \zeta_0^{\mathrm f,(i)}\overset{\mathrm{i.i.d.}}{\sim}\mathcal N(0,I_3).
\end{align*}
The ETKF update used in the experiments is given in
Appendix~\ref{app:etkf-update}.

\subsection{Short-time experiment}

We first examine the short-time behavior of the expansion for the Lorenz--63
model with the time horizon $T=4$. The results are averaged over 100
independent seeds.
Table \ref{tab:t4-rmse} reports the per-component
space--time RMSE against the truth and against the PF estimate (PFRMSE).

\begin{table}[!htbp]
  \centering
  \small
  \caption{Short-time filtering accuracy for \(T=4\).
  RMSE denotes the per-component space--time RMSE of each method against the
  truth. Entries are RMSE (PFRMSE), averaged over 100 seeds; parentheses
  contain the per-component space--time RMSE against the PF estimate with
  10000 particles, not a standard deviation. The PF's distance to itself is
  zero and is denoted by a dash.}
  \label{tab:t4-rmse}
  \begin{subtable}{\textwidth}
    \centering
    \caption{\(\epsilon=0.2\)}
    \begin{tabular}{lcccc}
      \hline
      \(H\) & PF & order0 & order1 & order2 \\
      \hline
      \(H_{\mathrm{full}}\) & 0.281 (---) & 0.314 (0.134) & 0.281 (0.019) & 0.281 (0.013) \\
      \(H_{xz}\)            & 0.304 (---) & 0.328 (0.127) & 0.305 (0.021) & 0.304 (0.014) \\
      \(H_x\)               & 0.291 (---) & 0.296 (0.058) & 0.290 (0.017) & 0.291 (0.008) \\
      \(H_y\)               & 0.288 (---) & 0.296 (0.081) & 0.287 (0.018) & 0.288 (0.010) \\
      \(H_z\)               & 0.288 (---) & 0.296 (0.098) & 0.287 (0.017) & 0.288 (0.011) \\
      \hline
    \end{tabular}
  \end{subtable}

  \vspace{0.5em}
  \begin{subtable}{\textwidth}
    \centering
    \caption{\(\epsilon=0.5\)}
    \begin{tabular}{lcccc}
      \hline
      \(H\) & PF & order0 & order1 & order2 \\
      \hline
      \(H_{\mathrm{full}}\) & 0.542 (---) & 0.791 (0.567) & 0.546 (0.087) & 0.545 (0.056) \\
      \(H_{xz}\)            & 0.618 (---) & 0.830 (0.563) & 0.627 (0.099) & 0.622 (0.060) \\
      \(H_x\)               & 0.674 (---) & 0.746 (0.326) & 0.675 (0.096) & 0.678 (0.044) \\
      \(H_y\)               & 0.628 (---) & 0.746 (0.405) & 0.629 (0.088) & 0.631 (0.049) \\
      \(H_z\)               & 0.607 (---) & 0.746 (0.448) & 0.611 (0.082) & 0.610 (0.046) \\
      \hline
    \end{tabular}
  \end{subtable}

  \vspace{0.5em}
  \begin{subtable}{\textwidth}
    \centering
    \caption{\(\epsilon=1.0\)}
    \begin{tabular}{lcccc}
      \hline
      \(H\) & PF & order0 & order1 & order2 \\
      \hline
      \(H_{\mathrm{full}}\) & 0.806 (---) & 1.634 (1.405) & 0.880 (0.327) & 0.882 (0.306) \\
      \(H_{xz}\)            & 0.938 (---) & 1.726 (1.452) & 1.049 (0.400) & 1.039 (0.368) \\
      \(H_x\)               & 1.159 (---) & 1.533 (1.027) & 1.215 (0.375) & 1.208 (0.271) \\
      \(H_y\)               & 0.997 (---) & 1.533 (1.163) & 1.061 (0.345) & 1.050 (0.271) \\
      \(H_z\)               & 0.945 (---) & 1.533 (1.195) & 1.018 (0.320) & 0.989 (0.237) \\
      \hline
    \end{tabular}
  \end{subtable}
\end{table}

For this short horizon, the zeroth-order approximation is already noticeably
less accurate than the PF reference, while the first- and second-order
approximations are close to PF for all observation patterns.  The change in
truth RMSE from first to second order is modest over these noise levels and
this horizon, whereas the second-order approximation has a smaller PFRMSE
than the first-order approximation in all the reported cases. These results
provide numerical support for the effectiveness of the second-order correction
over the short time horizon considered here.

\subsection{Long-time experiment}
In the long-time application of the expansion, one can expect that the expansion diverges, since the reference solution $X_t^{[0]}$ can deviate substantially from the true state $X_t^\epsilon$ due to the chaotic property of the model. In fact,
Figure \ref{fig:global-expansion-divergence} shows a single-seed example of
the global expansion without recentering.  Although the first-order estimate
remains on the same scale as the zeroth-order estimate, the second-order
estimate eventually grows by several orders of magnitude.  This illustrates
that the global expansion is not suitable for long-time filtering without
additional reparameterization.

\begin{figure}[!htbp]
  \centering
  \includegraphics[width=0.82\textwidth]{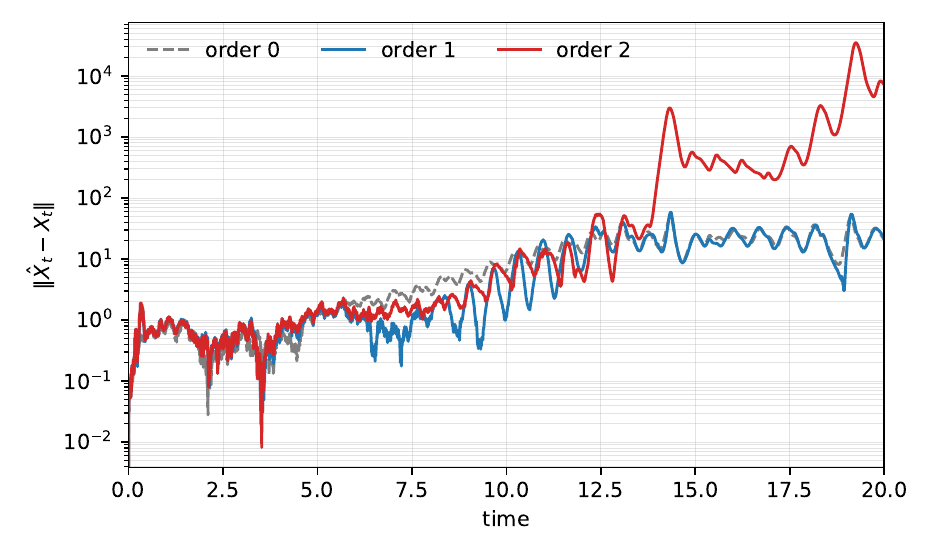}
  \caption{Divergence of the global expansion without recentering for one
  realization of the fully observed Lorenz--63 model.  The plot shows
  \(\|\hat X_t-X_t\|\) for \(T=20\), \(\epsilon=0.5\), and seed 123.}
  \label{fig:global-expansion-divergence}
\end{figure}

To address this issue, we apply the following reset after each time-discretization step.

\begin{paperbox}
\textbf{Reset algorithm}\par\smallskip
Suppressing the time index, let $x$ denote the current reference point,
and let $m^{(1)},\allowbreak m^{(2)},\allowbreak P,\allowbreak Q,\allowbreak K$ denote the current coefficient values.
Apply the following operations in order, using the values before each
operation on its right-hand sides.

\smallskip
\textbf{1. Recentering.}
For the second-order approximation, update
\begin{equation}
  x\leftarrow x+\epsilon m^{(1)}+\frac{\epsilon^2}{2}m^{(2)},
  \qquad m^{(1)}\leftarrow0,\qquad m^{(2)}\leftarrow0,
  \label{eq:reset-centering}
\end{equation}
leaving $P,Q,K$ unchanged. For the first-order approximation, use only
$x\leftarrow x+\epsilon m^{(1)}$ and $m^{(1)}\leftarrow0$, leaving $P$ unchanged,
and omit Step 2.

\smallskip
\textbf{2. Redistribution (second order only).}
Set
\begin{equation}
  L=I_3+\frac{\epsilon}{2}Q^\top P^\dagger,
  \label{eq:reset-linear-map}
\end{equation}
where $P^\dagger$ is the Moore--Penrose pseudoinverse, and update
\begin{equation}
  P\leftarrow LPL^\top,\qquad Q\leftarrow0,\qquad
  K\leftarrow K\times_2 L^\top\times_3 L^\top,
  \label{eq:reset-rule}
\end{equation}
leaving $x,m^{(1)},m^{(2)}$ unchanged. Skip Step 2 if $L$ fails the numerical
nonsingularity check. We then continue the filtering recursion
from the updated coefficients, evaluating the drift and its derivatives at
the updated reference point $x$.
\end{paperbox}

\begin{remark}
(1) The first-order approximation with recentering is similar in construction to the discrete-time extended Kalman filter, in which the model equation is linearized around the posterior mean from the previous step.\\
(2) The redistribution is motivated by viewing the evolution as a composition of short-time
steps, each starting from a random initial state as described in
Remark~\ref{rem:random-initial-coefficients}, conditionally on the observations
at the restart time. After recentering, we rewrite
the second-order expansion of the state as
\begin{align*}
  x+\epsilon Z+\frac{\epsilon^2}{2}U
  &=x+\epsilon\underbrace{\left(I_3+\frac{\epsilon}{2}D\right)}_{L}Z
    +\frac{\epsilon^2}{2}(U-DZ),
\end{align*}
where $D=Q^\top P^\dagger$ and $Z,U$ denote the variables after recentering.
We then evolve the tower equations with the updated initial values
\begin{align*}
  x^+=x,\qquad Z^+=LZ,\qquad U^+=U-DZ,
\end{align*}
so that
\begin{align*}
  (m^{(1)})^+=(m^{(2)})^+=Q^+=0
\end{align*}
at the restart time. This yields the redistribution rule \eqref{eq:reset-rule}.
\end{remark}

We now evaluate the reset procedure over $T=50$, using 100 seeds for each
observation matrix and noise level.

The first-order method uses recentering only, whereas the second-order method
uses both recentering and redistribution after every step. The zeroth-order
method is the deterministic reference trajectory. Table \ref{tab:t50-rmse}
reports RMSE (PFRMSE) together with the
number of failed second-order runs. A run is classified as failed if an
estimate becomes nonfinite, any of its components exceeds $10^6$ in absolute
value, or a numerical error prevents completion.

\begin{table}[!htbp]
  \centering
  \small
  \setlength{\tabcolsep}{4pt}
  \caption{Long-time filtering accuracy for $T=50$. Entries are RMSE
  (PFRMSE), with the same definitions as in Table \ref{tab:t4-rmse}.
  Means are computed over successful runs only. The last column gives the
  number of failed order2 runs out of 100; the other methods completed all
  runs. A standalone dash denotes no successful runs, whereas the dash in
  the PF parentheses denotes its zero distance to itself.}
\label{tab:t50-rmse}
  \begin{subtable}{\textwidth}
    \centering
    \caption{$\epsilon=0.2$}
    \begin{tabular}{lcccccc}
      \hline
      $H$ & PF & ETKF & order0 & order1 & order2 & Failures \\
      \hline
      $H_{\mathrm{full}}$ & 0.723 (---) & 0.911 (0.550) & 10.733 (10.707) & 5.985 (5.934) & 0.745 (0.168) & 0 \\
      $H_{xz}$ & 1.006 (---) & 1.778 (1.446) & 10.688 (10.642) & 8.789 (8.728) & 1.394 (0.839) & 1 \\
      $H_x$ & 1.614 (---) & 2.558 (1.975) & 10.656 (10.534) & 8.595 (8.438) & 3.175 (2.683) & 29 \\
      $H_y$ & 1.074 (---) & 1.536 (1.093) & 10.656 (10.607) & 7.371 (7.293) & 1.588 (1.083) & 11 \\
      $H_z$ & 5.733 (---) & 8.190 (5.986) & 10.656 (9.046) & 9.613 (7.710) & --- & 100 \\
      \hline
    \end{tabular}
  \end{subtable}

  \vspace{0.5em}
  \begin{subtable}{\textwidth}
    \centering
    \caption{$\epsilon=0.5$}
    \begin{tabular}{lcccccc}
      \hline
      $H$ & PF & ETKF & order0 & order1 & order2 & Failures \\
      \hline
      $H_{\mathrm{full}}$ & 0.888 (---) & 0.994 (0.440) & 10.823 (10.785) & 1.533 (1.183) & 0.893 (0.109) & 0 \\
      $H_{xz}$ & 1.183 (---) & 1.708 (1.205) & 10.839 (10.773) & 7.560 (7.461) & 1.325 (0.473) & 0 \\
      $H_x$ & 1.823 (---) & 2.467 (1.648) & 10.790 (10.647) & 8.115 (7.890) & 2.448 (1.554) & 9 \\
      $H_y$ & 1.272 (---) & 1.542 (0.886) & 10.790 (10.724) & 5.022 (4.828) & 1.419 (0.527) & 3 \\
      $H_z$ & 5.909 (---) & 8.081 (5.689) & 10.790 (9.133) & 9.642 (7.598) & --- & 100 \\
      \hline
    \end{tabular}
  \end{subtable}

  \vspace{0.5em}
  \begin{subtable}{\textwidth}
    \centering
    \caption{$\epsilon=1.0$}
    \begin{tabular}{lcccccc}
      \hline
      $H$ & PF & ETKF & order0 & order1 & order2 & Failures \\
      \hline
      $H_{\mathrm{full}}$ & 1.065 (---) & 1.130 (0.370) & 11.079 (11.017) & 1.216 (0.580) & 1.070 (0.101) & 0 \\
      $H_{xz}$ & 1.427 (---) & 1.771 (1.028) & 10.971 (10.879) & 5.347 (5.148) & 1.476 (0.354) & 0 \\
      $H_x$ & 2.080 (---) & 2.532 (1.426) & 11.046 (10.845) & 6.495 (6.155) & 2.279 (0.905) & 1 \\
      $H_y$ & 1.488 (---) & 1.661 (0.729) & 11.046 (10.942) & 2.134 (1.500) & 1.522 (0.325) & 0 \\
      $H_z$ & 6.268 (---) & 8.358 (5.574) & 11.046 (9.118) & 9.493 (7.379) & --- & 100 \\
      \hline
    \end{tabular}
  \end{subtable}
\end{table}

As a practical finite-ensemble baseline, the ETKF completed all runs across
the observation configurations and noise levels considered here. Despite
using only $M_{\mathrm{ETKF}}=20$ ensemble members, compared with
$M_{\mathrm{PF}}=10000$ particles for the PF, it attained truth RMSE close to
that of the PF under full observation.
Its lower truth RMSE than the first-order approximation suggests a practical
benefit of propagating the ensemble with the original nonlinear Lorenz--63
drift, although the observed difference cannot be attributed to this effect
alone. Similar truth RMSE values for the PF and ETKF do not imply a small
PF--ETKF RMSE. Such a discrepancy is consistent with the fact that the ETKF
is not generally expected to converge to the nonlinear filtering posterior
even in the infinite-ensemble limit.

Under full observation, the second-order method completes all runs and
achieves truth RMSE close to that of the PF for all three noise levels, with
substantially smaller PFRMSE than the first-order method with recentering
alone. For example, at $\epsilon=0.5$, the first- and second-order PFRMSEs
are $1.183$ and $0.109$, respectively. The results for $H_{xz}$ also show
markedly smaller errors for the second-order method, although one run fails
at $\epsilon=0.2$. Compared with the ETKF baseline with 20 members,
the second-order method under full observation reduces the mean truth RMSE
by $18.2\%$, $10.1\%$, and $5.4\%$ for $\epsilon=0.2,0.5,1.0$,
respectively. The corresponding reductions in mean PFRMSE are $69.6\%$,
$75.2\%$, and $72.6\%$. These percentages are computed from the unrounded
means as $100(1-\mathrm{error}_{\mathrm{order2}}/\mathrm{error}_{\mathrm{ETKF}})$.

The improvement is less robust under one-dimensional observations. The
second-order method fails in some runs for $H_x$ and $H_y$, and in all runs
for $H_z$, whereas the ETKF completes all runs. Where failures occur, the
second-order method's reported errors are conditional on
successful completion and are not directly comparable to the other methods'
100-seed averages. Even among successful runs, the tabulated mean errors
are not uniformly smaller than the ETKF averages: at $\epsilon=0.2$,
both errors are larger for $H_x$, and truth RMSE is larger for $H_y$.
Thus the reset procedure enables accurate long-time
filtering in the fully observed examples, but does not prevent numerical
breakdown for all observation patterns.

To isolate the role of redistribution, Figure \ref{fig:recentering-ablation}
compares the second-order method with recentering alone against the method
with both reset operations, using the same seed and observation setting as
Figure \ref{fig:global-expansion-divergence} and extending the same realization
to $T=50$. Recentering alone delays the growth of the error but does not
prevent numerical breakdown at $t\approx29.65$. With redistribution, the error remains
small over the displayed interval. This example illustrates the importance
of redistribution beyond recentering, although it does not establish
stability for all realizations.

\begin{figure}[!htbp]
  \centering
  \includegraphics[width=0.82\textwidth]{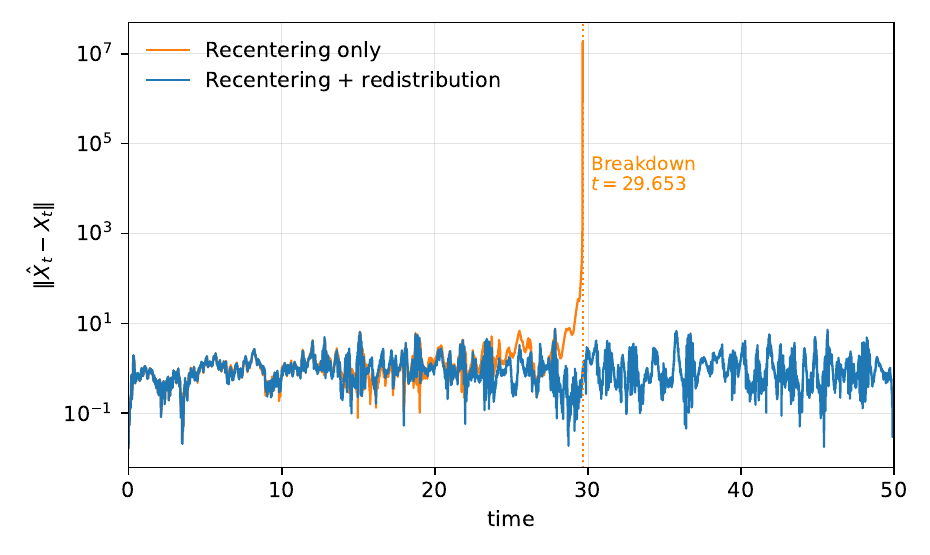}
  \caption{Effect of redistribution on the second-order approximation under
  full observation, with $\epsilon=0.5$, seed 123, $\Delta t=0.001$, and
  $\sigma=0.8$. Both methods apply recentering after every step; only the blue
  curve additionally uses redistribution. The recentering-only curve is
  stopped when an estimate component first exceeds $10^6$ in absolute value
  ($t=29.653$).}
  \label{fig:recentering-ablation}
\end{figure}

\FloatBarrier
\section{Summary and discussion}
In this work, we derived recursive formulas for computing the asymptotic expansion of the conditional expectation. In particular, the zeroth-order approximation is the deterministic reference solution, the first-order coefficient is obtained from a Kalman--Bucy filter for the linearized model, and the second-order coefficient is computed through a closed finite-dimensional system involving the auxiliary variables $Q$ and $K$.
This finite-dimensional structure extends to higher-order coefficients.

In the short-time experiments for the Lorenz-63 model with time horizon $T=4$, we observed that including the second-order correction reduced the discrepancy from the particle-filter reference with 10,000 particles in all the cases considered. These results provide numerical support for the accuracy of the expansion as an approximation to the conditional expectation over short time horizons.

Over longer time horizons, however, simply applying the expansion is insufficient to maintain accurate estimates. To stabilize its long-term behavior, we introduce recentering and redistribution at each Euler step. With these reset procedures, the second-order approximation improves on the recentered first-order approximation under full and $xz$-observations in the reported experiments. In particular, under full observation, it achieves accuracy close to that of the particle-filter reference with 10,000 particles over the time horizon $T=50$. 

In the Lorenz--63 example, the recursive state, including the reference solution, requires only 42 independent scalar components when the symmetries of $P$ and $K$ are taken into account. For a signal of dimension $d_1$, the number of independent scalar components required up to second order is
\begin{align*}
  3d_1+\frac{d_1(d_1+1)}{2}+d_1^2+\frac{d_1^2(d_1+1)}{2}.
\end{align*}
Thus, the dimension of the second-order coefficient system grows polynomially with the state dimension, whereas standard particle filters can require exponentially many particles to avoid weight degeneracy in high-dimensional settings \citep{snyderObstaclesHighDimensionalParticle2008}. Together with the accuracy observed in our Lorenz--63 experiments, this suggests the potential to achieve comparable filtering accuracy at a lower computational cost than standard particle filters in higher-dimensional problems.

Nevertheless, the second-order approximation can still become unstable even with these resets when fewer state components are observed. Furthermore, a rigorous explanation of how repeated resets preserve accuracy over long time horizons remains to be established, and a stable extension of the redistribution procedure to third- and higher-order approximations is still an open problem. These limitations indicate that the current reset procedure is not yet theoretically complete and leave room for further improvement.

For applications to higher-dimensional models, a further reduction in the dimension of the coefficient system is required. Although the number of variables needed for an expansion of fixed order grows polynomially with the state dimension, the resulting system is still too large for operational weather forecasting models.

However, the covariance matrix governed by the Riccati equation can itself exhibit an intrinsic low-rank structure. For linear dynamics without process noise, \citet{bocquet2017degenerate} show that the Kalman filter covariance is asymptotically supported on the unstable-neutral subspace, whose dimension can be much smaller than the full state dimension. Numerical studies of the EnKBF for Lorenz--63 and the ETKF for Lorenz--96 also show that accurate state estimation is possible with rank-deficient ensemble covariances in nonlinear models \citep{dewiljesLongTimeStability2018,takedaNoisescaledAccuracyEnsemble2026}. Motivated by these results, we expect that the auxiliary matrices and tensors in our coefficient system may also exhibit low effective rank in suitable regimes, allowing the system to be compressed through mathematical techniques that exploit this structure. Establishing such a property in the presence of system noise and developing the corresponding reduced equations remain topics for future work.

\section*{Acknowledgments}
The authors thank Professor Yasumasa Matsuda for helpful discussions and advice.
This work was supported by JSPS KAKENHI Grant Number JP26H01952.
Masahiro Kurisaki was supported by the RIKEN Special Postdoctoral Researcher Program.
Kota Takeda was supported by JSPS KAKENHI Grant Number JP26K17033.

\appendix
\section{ETKF update used in the numerical experiments}
\label{app:etkf-update}

Write $M=M_{\mathrm{ETKF}}$ for the ETKF ensemble size, and let
$\{X_k^{\mathrm f,(i)}\}_{i=1}^M$ be the forecast ensemble at $t_k$
before assimilating $\Delta Y_k$, and define its mean and unnormalized anomaly
matrix by
\begin{align*}
  \bar X_k^{\mathrm f}
  &=\frac{1}{M}\sum_{i=1}^M X_k^{\mathrm f,(i)},\\
  \delta X_k^{\mathrm f}
  &=\begin{bmatrix}
    X_k^{\mathrm f,(1)}-\bar X_k^{\mathrm f} & \cdots &
    X_k^{\mathrm f,(M)}-\bar X_k^{\mathrm f}
  \end{bmatrix}.
\end{align*}
Using the likelihood \eqref{eq:numerical-likelihood}, we set
\begin{align*}
  S_k
  &=(M-1)I_M
    +\frac{\Delta t}{\sigma^2}
      (H\delta X_k^{\mathrm f})^\top H\delta X_k^{\mathrm f},\\
  \bar X_k^{\mathrm a}
  &=\bar X_k^{\mathrm f}
    +\frac{1}{\sigma^2}
      \delta X_k^{\mathrm f}S_k^{-1}
      (H\delta X_k^{\mathrm f})^\top
      \left(\Delta Y_k-\Delta t\,H\bar X_k^{\mathrm f}\right),\\
  T_k&=\left[(M-1)S_k^{-1}\right]^{1/2},\\
  X_k^{\mathrm a,(i)}
  &=\bar X_k^{\mathrm a}
    +\delta X_k^{\mathrm f}T_k e_i,
  \qquad i=1,\ldots,M,
\end{align*}
where $T_k$ is the unique symmetric positive-definite square root and $e_i$ is
the $i$th standard basis vector of $\mathbb R^M$. Each analysis member is then
advanced by one Euler--Maruyama step,
\begin{align*}
  X_{k+1}^{\mathrm f,(i)}
  =X_k^{\mathrm a,(i)}
   +a(X_k^{\mathrm a,(i)})\Delta t
   +\epsilon\sqrt{\Delta t}\,\omega_k^{(i)},
\end{align*}
where $\{\omega_k^{(i)}\}_{k\geq0,\,1\leq i\leq M}$ are i.i.d.
$\mathcal N(0,I_3)$ vectors. The reported ETKF estimate is
\begin{align*}
  \widehat X_{k+1}^{\mathrm{ETKF}}
  =\frac{1}{M}\sum_{i=1}^M X_{k+1}^{\mathrm f,(i)}.
\end{align*}

\bibliographystyle{plainnat}
\bibliography{order2_lorenz63_filtering}

\end{document}